\documentclass{amsart}
\usepackage{graphicx}
\usepackage{amsmath}
\usepackage{amsthm}
\usepackage{amssymb}
\usepackage{url}
\usepackage{algorithm,algorithmic}
\usepackage{setspace}
\usepackage{color}
\usepackage{array}
\usepackage{epsfig}
\usepackage{subfigure}



\newtheorem{thm}{Theorem}
\newcommand{\bthm}{\begin{thm}}
\newcommand{\ethm}{\end{thm}}

\newtheorem{lemma}{Lemma}
\newcommand{\blemma}{\begin{lemma}}
\newcommand{\elemma}{\end{lemma}}

\newtheorem{cor}{Corollary}
\newcommand{\bcor}{\begin{cor}}
\newcommand{\ecor}{\end{cor}}

\begin{document}

\title{Simple linear algorithms for mining graph cores}

 \author{Yang Xiang}
 \address{Department of Biomedical Informatics, The Ohio State University, Columbus, OH 43210}
  \email{yxiang@bmi.osu.edu}

\begin{abstract}
Batagelj and Zaversnik proposed a linear algorithm for the well-known $k$-core decomposition problem. However, when $k$-cores are desired for a given $k$, we find that a simple linear algorithm requiring no sorting works for mining $k$-cores. In addition, this algorithm can be extended to mine $(k_1, k_2,\ldots, k_p)$-cores from $p$-partite graphs in linear time, and this mining approach can be efficiently implemented in a distributed computing environment with a lower message complexity bound in comparison with the best known method of distributed $k$-core decomposition.
\end{abstract}

\keywords{$k$-core, core decomposition, graph mining}

\maketitle

\section{Introduction and Basic Definitions}
Finding dense modules in graphs is of interest to many applications. In social networks, dense graph modules are often associated with communities~\cite{du2007community}. In bioinformatics, dense subgraph mining is an important approach for identifying potential biomarkers~\cite{parthasarathy2010survey} and important biomedical functions~\cite{altaf2003prediction}. However, many dense graph mining methods (e.g.~\cite{mushlin2009clique,xiang2012predicting}) are derived from clique mining or quasi-clique mining. Since enumerating all maximal cliques is an NP-hard problem~\cite{karp2010reducibility}, these dense graph mining methods usually have a high worst-case time complexity, and thus can hardly be applied to large graphs. In contrast, a $k$-core decomposition can be implemented in linear time~\cite{batagelj2003m}, and therefore it is a very popular method for graph mining and analysis.

Let $G=(V,E)$ be a graph with a vertex set $V$ and an edge set $E$. A $k$-core in a graph is a maximal connected component in which every vertex has a degree $k$ or larger. $k$-core decomposition typically refers to the identification of the maximum $k$ value for each vertex such that there exists a $k$-core containing the vertex.
To facilitate our discussion, we make the following definitions. Different from the definition of $k$-core decomposition, we define $k$-core mining to be the identification of all $k$-cores for a given $k$. We define a $k$-degree graph to be a graph in which any vertex has a degree at least $k$, and we define $G(k)$ to be the largest $k$-degree graph that is a subgraph of $G$. It is easy to conclude that $G(k)$, if exists, is unique, and any maximal connected component in $G(k)$ is a $k$-core. It is also not difficult to observe that any clique in $G$ with size $k+1$ or larger is preserved in $G(k)$. In other word, a $k$-degree graph preserves all cliques of size $k+1$ or larger, thus it can be used as a preprocessing for clique mining and related tasks.

\section{Mining $k$-cores by the linear algorithm \textsc{GraphPeel}}

Batagelj and Zaversnik~\cite{batagelj2003m} proposed a linear implementation of the $k$-core decomposition. However, in some applications $k$-cores are desired for a given $k$. In this case, we find that an alternative linear algorithm requiring no sorting works for mining $k$-cores. As a result, this linear algorithm is much simpler to implement. To the best of our knowledge, this discovery has not been introduced in literature. Algorithm~\ref{alg:graphPeelOff}, ~\textsc{GraphPeel}, is the pseudocode of the simple $k$-core mining algorithm. It scans all the vertices only once (the for loop at Step 3), and start a recursive processing on a vertex with degree less than $k$ (the while loop at Step 7).

\begin{algorithm}
\begin{scriptsize}
\caption{\textsc{GraphPeel($G=(V,E)$, $k$)}}
\label{alg:graphPeelOff}
\begin{algorithmic}[1]
\STATE Initialize $activeVertex$ vector to be all true;
\STATE Initialize $Counter$ vector to be the degree of each vertex;
\FORALL{$v\in V$}
    \IF{$activeVertex(v)==true$ \&\& $Counter(v)<k$}
        \STATE $activeVertex(v)=false$;
        \STATE Add $v$ into $Q$;
        \WHILE{$Q\neq\emptyset$ }
            \STATE $w=dequeue(Q)$;
            \FORALL{$u$ adjacent to $w$}
                \STATE $Counter(u)=Counter(u)-1$;
                \IF{$activeVertex(u)==true$ \&\& $Counter(u)<k$}
                    \STATE $activeVertex(u)=false$;
                    \STATE enqueue $u$ onto $Q$;
                \ENDIF
            \ENDFOR
        \ENDWHILE
    \ENDIF
\ENDFOR
\STATE Regenerate $G$ induced by active vertices.
\end{algorithmic}
\end{scriptsize}
\end{algorithm}

In the following, we prove Theorems~\ref{thm:correctness} and~\ref{thm:time}, which state the correctness and linear time complexity of \textsc{GraphPeel} algorithm for mining $k$-cores.

\begin{thm}\label{thm:correctness}
Algorithm~\ref{alg:graphPeelOff} \textsc{GraphPeel} processes the graph $G$ into $G(k)$.
\end{thm}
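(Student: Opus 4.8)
The plan is to reason about the $Counter$ vector in parallel with the evolving set of active vertices. The key bookkeeping fact is the invariant that, at every point of the execution, $Counter(u) = \deg_G(u) - d(u)$, where $d(u)$ is the number of neighbors of $u$ already removed from $Q$: indeed $Counter(u)$ starts at $\deg_G(u)$ (Step~2) and is decremented (Step~10) exactly once for each dequeued neighbor of $u$. Two further elementary observations: a vertex is dequeued if and only if it was enqueued if and only if it was set inactive --- the guard $activeVertex(\cdot)==true$ at Steps~4 and~11 makes each vertex enter $Q$ at most once, which also shows the algorithm terminates --- and when the algorithm halts $Q$ is empty, so every vertex that was ever set inactive has in fact been dequeued. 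Writing $A$ for the set of vertices active at termination and $G[A]$ for the subgraph of $G$ it induces, it follows that $Counter(v) = \deg_{G[A]}(v)$ for every $v \in A$ at the end, since the neighbors of $v$ outside $A$ are precisely its inactive, hence already-dequeued, neighbors.

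First I would establish soundness: $G[A]$ is a $k$-degree graph. Fix $v \in A$. The value $Counter(v)$ changes only by decrements, and immediately after each decrement Step~11 tests whether $v$ is active and $Counter(v) < k$; also, when the loop of Step~3 reaches $v$, Step~4 applies the same test to the then-current value of $Counter(v)$. Since $v$ is never set inactive, every value $Counter(v)$ assumes once it has been tested is $\ge k$; in particular its terminal value is $\ge k$, so by the invariant $\deg_{G[A]}(v) \ge k$, and $G[A]$ is a $k$-degree subgraph of $G$.

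Next I would establish maximality: $V(H) \subseteq A$ for every $k$-degree subgraph $H$ of $G$. Suppose not, and let $v$ be the first vertex of $H$ ever set inactive. At that instant $Counter(v) < k$, so by the invariant the set $D_v$ of neighbors of $v$ already dequeued satisfies $\abs{D_v} = \deg_G(v) - Counter(v) > \deg_G(v) - k$. Every vertex of $D_v$ was set inactive strictly before $v$ (inactivation precedes enqueueing, which precedes dequeueing), hence, by the choice of $v$, lies outside $H$; therefore $\abs{D_v} \le \deg_G(v) - \deg_H(v)$. The two inequalities together give $\deg_H(v) < k$, contradicting that $H$ is $k$-degree.

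Combining the two halves, $G[A]$ is a $k$-degree subgraph of $G$ that contains every $k$-degree subgraph of $G$, so it is the unique largest one, i.e.\ $G(k)$; since Step~14 outputs exactly $G[A]$, the theorem follows. I expect the only delicate point to be the maximality argument: because a vertex may linger in $Q$ after being set inactive and before its dequeuing updates its neighbors' counters, one must compare \emph{inactivation times} (as above) rather than reason about the counter values of $v$'s neighbors at the moment $v$ itself is deactivated, since those values need not yet reflect all of $v$'s inactive neighbors.
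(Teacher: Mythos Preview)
Your proof is correct and follows essentially the same two-part strategy as the paper: show the output is a $k$-degree graph (soundness) and show every $k$-degree subgraph survives (maximality), the latter by considering the \emph{first} vertex of $G(k)$ (or, in your more general formulation, of $H$) to be deactivated and deriving a counting contradiction. Your version is somewhat more explicit than the paper's---you state the $Counter$ invariant precisely and carefully separate inactivation time from dequeue time---but the underlying argument is the same.
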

\begin{proof}
Let $G'$ be the result graph after applying Algorithm~\ref{alg:graphPeelOff} on $G$. We need to prove that $G'=G(k)$. The definition of $G(k)$ implies that $G'=G(k)$ if and only if both of the following two statements hold:
\begin{itemize}

\item[1.] Any vertex in $G'$ has a degree at least $k$.
\item[2.] $G(k)$ is a subgraph of $G'$.
\end{itemize}
Proof of Statement 1:\\
Let us assume that there exists a vertex $v$ in $G'$ such that $d_{G'}(v)<k$. We will show in the following this is a contradiction. The degree of this vertex in $G$ is either of the two cases: (1) $d_G(v)<k$; (2) $d_G(v)\geq k$.

Case (1) is not possible because if $activeVertex(v)==true$ when the for loop (Step 3) reaches $v$, $activeVertex(v)$ will be turned into $false$. Therefore $activeVertex(v)$ must be false before $G'$ is generated, thus $v$ will not be included in $G'$.

Case (2) implies that the degree of $v$ is reduced below $k$ during the execution of Algorithm~\ref{alg:graphPeelOff}. However, the only degree reduction in Algorithm~\ref{alg:graphPeelOff} takes place at Step 10, and it is immediately followed by degree check (step 11 and 12). That is, immediately after the degree of $v$ is reduced below $k$, $activeVertex(v)$ will be turned into false if it is true. In another word, $v$ will not be included in $G'$.

Since both cases are not possible, we have reached a contradiction and proved Statement 1.

Proof of Statement 2:\\
Let us assume that there exists a vertex $v$ in $G(k)$ such that $v\not\in G'$. We will show in the following this is a contradiction. Since $v\not\in G'$, $v$ was removed (i.e., $activeVertex(v)$ marked as false) during the execution of Algorithm~\ref{alg:graphPeelOff}. Since $v\in G(k)$, we conclude that $activeVertex(v)$ can only be turned into false at Step 12, which was caused by the removal of $d_G(v)-k+1$ vertices from $v$'s neighborhood. Once again, since $v\in G(k)$, we conclude that at least one vertex (letting it be $w$) among the $d_G(v)-k+1$ removed vertices belongs to $G(k)$. Since $w$ was removed, we conclude $w\not\in G'$. We can apply this reasoning on $w$ and recursively repeat this process until we come to $u$, the first vertex in $G(k)$ which was removed by Algorithm~\ref{alg:graphPeelOff}. Then we reach a contradiction because $Counter(u)\geq k$ always hold before $u$ was removed.
\end{proof}

\begin{thm}\label{thm:time}
Algorithm~\ref{alg:graphPeelOff} \textsc{GraphPeel} runs in $O(|V|+|E|)$ time.
\end{thm}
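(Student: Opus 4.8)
The plan is to show that the total work done by \textsc{GraphPeel} is bounded by a constant amount of work per vertex plus a constant amount of work per edge. First I would account for the initialization: Steps 1 and 2 set up the \emph{activeVertex} and \emph{Counter} vectors, each of which takes $O(|V|)$ time (computing all vertex degrees is $O(|V|+|E|)$ if we scan adjacency lists, or $O(|V|)$ if degrees are available). Next I would charge the for loop at Step 3: it visits each vertex exactly once, performing an $O(1)$ test at Step 4, so this contributes $O(|V|)$ exclusive of the body of the while loop.

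The crux of the argument is bounding the total cost of all executions of the while loop at Step 7, aggregated across all iterations of the outer for loop. The key observation is that a vertex is enqueued onto $Q$ at most once during the entire run of the algorithm: a vertex is enqueued only at Step 6 or Step 13, and in both cases this is immediately preceded (Step 5 or Step 12) by setting \emph{activeVertex} to false, while the guard at Step 4 and Step 11 requires \emph{activeVertex} to be true. Since \emph{activeVertex} is never reset to true, each vertex enters $Q$ at most once, hence is dequeued at most once. Therefore, over the whole execution, the inner for loop at Step 9 iterates, for each dequeued vertex $w$, exactly $d_G(w)$ times, and each such iteration costs $O(1)$ (Steps 10--13). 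Summing over all vertices that are ever dequeued gives $\sum_{w} d_G(w) \le \sum_{v\in V} d_G(v) = 2|E|$, so the total cost of all while loops is $O(|V|+|E|)$.

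Finally, Step 14 regenerates the induced subgraph on the active vertices, which can be done in $O(|V|+|E|)$ time by one scan over the original vertex and edge sets, keeping only those endpoints that remain active. Adding up the contributions --- $O(|V|+|E|)$ for initialization, $O(|V|)$ for the outer loop skeleton, $O(|V|+|E|)$ for all inner while/for loop work, and $O(|V|+|E|)$ for regeneration --- yields the claimed $O(|V|+|E|)$ bound. I expect the only subtle point to be the amortization argument showing each vertex is dequeued at most once; once that is established via the monotonicity of \emph{activeVertex} (it only ever flips from true to false), the rest is a routine aggregation over incident edges.
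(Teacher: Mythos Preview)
Your proof is correct and follows essentially the same argument as the paper: both establish that each vertex is enqueued and dequeued at most once (via the monotonicity of \emph{activeVertex}) and then bound the total inner-loop work by the sum of vertex degrees. Your version is more thorough---explicitly accounting for initialization and Step~14, and correctly writing the degree sum as $2|E|$ rather than the paper's looser ``each edge at most once''---but the core amortization idea is identical.
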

\begin{proof}
First, it is easy to see each vertex will be enqueued and dequeued at most once, the active status of each vertex will be changed at most once, and status check (Step 4) immediately follow the for-loop takes at most once on each vertex. More importantly, observing that a visit to an adjacent vertex (Step 9) is an edge visit, we conclude that Algorithm~\ref{alg:graphPeelOff} \textsc{GraphPeel} visit each edge at most once (i.e., only when one end vertex is dequeued). This edge visit includes updating the $Counter$ of an end vertex (Step 10) and checking its status (Step 11), both taking constant time. As a conclusion, Algorithm~\ref{alg:graphPeelOff} \textsc{GraphPeel} takes $O(|V|+|E|)$ time in the worst case.
\end{proof}
As we can see, the \textsc{GraphPeel} algorithm for mining $k$-cores is succinct and easy to implement. More importantly, this approach can be applied to mine $(k_1, k_2,\ldots,k_p)$-cores and extended to a distributed environment resulting in a low message complexity bound, as we will describe in the following two sections.

\section{Mining$(k_1, k_2,\ldots,k_p)$-cores from $p$-partite graphs}
A $p$-partite graph is defined as a graph whose vertices can be partitioned into $p$ disjoint sets such that no two vertices within the same set are adjacent. $p$-partite graphs ($p\geq 2$) have been frequently used to model real data and are of interest to many applications. For example, identifying dense components in bipartite graphs ($p=2$) is an interesting data mining problem which can be associated with mining frequent itemsets~\cite{xiang2011summarizing}. Similar to mining $k$-cores, a corresponding $(k_1, k_2,\ldots,k_p)$-core mining algorithm is available for $p$-partite graphs. Such an algorithm can serve as efficient preprocessing for many applications on a $p$-partite graph. As an example of $(k_1, k_2,\ldots,k_p)$-core application, $(k_1, k_2)$-cores from bipartite graphs ($p=2$ in this case) have been used for network visualization~\cite{ahmed2007visualisation}.

Let $G=(V_1, V_2,\ldots,V_p, E)$ be a $p$-partite graph. A $(k_1, k_2,\ldots,k_p)$-core is a maximal connected component in $G$ such that each vertex in $V_i$ has a degree $k_i$ or larger, for any $i$ between 1 and $p$. We define a $(k_1, k_2,\ldots,k_p)$-degree graph to be a $p$-partite graph in which any vertex in $V_i$ has a degree at least $k_i$ (note that $V_i$ is allowed to be an empty set). We define $G(k_1, k_2,\ldots,k_p)$ to be the largest $(k_1, k_2,\ldots,k_p)$-degree graph that is a subgraph of $G$. Again, it is easy to conclude that $G(k_1, k_2,\ldots,k_p)$, if exists, is unique.

The graph peel approach introduced in the previous section can be extended to perform the mining of $(k_1, k_2,\ldots,k_p)$-cores from $p$-partite graphs. The pseudocode is listed in Algorithm~\ref{alg:bipartiteGraphPeelOff}. By literally following the proofs for Theorems~\ref{thm:correctness} and~\ref{thm:time}, we can show that Algorithm~\ref{alg:bipartiteGraphPeelOff} correctly performs the mining of $(k_1, k_2,\ldots,k_p)$-cores for the given $p$-partite graph, and it runs in linear time, as state in Corollaries~\ref{cor:correctness} and~\ref{cor:linear}.
\begin{cor}\label{cor:correctness}
Algorithm~\ref{alg:bipartiteGraphPeelOff} \textsc{p-partiteGraphPeel} processes the $p$-partite graph $G$ into $G(k_1, k_2,\ldots,k_p)$.
\end{cor}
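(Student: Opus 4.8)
The plan is to mirror the proof of Theorem~\ref{thm:correctness} almost verbatim, replacing the single threshold $k$ by the partition-dependent thresholds $k_i$. Let $G'$ be the graph returned by Algorithm~\ref{alg:bipartiteGraphPeelOff}. By the definition of $G(k_1,\ldots,k_p)$ and its uniqueness, it suffices to establish the two analogous statements: (1) every vertex $v\in V_i\cap G'$ satisfies $d_{G'}(v)\ge k_i$; and (2) $G(k_1,\ldots,k_p)$ is a subgraph of $G'$. For this I first need to inspect Algorithm~\ref{alg:bipartiteGraphPeelOff} to confirm the expected structure: the counter of a vertex $v\in V_i$ is initialized to $d_G(v)$, and whenever an adjacent vertex is dequeued the counter is decremented and immediately compared against that vertex's own threshold $k_i$ (the threshold now being read off from which part the vertex lies in), deactivating it if the counter drops below $k_i$. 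The only place a counter decreases is this neighbor-removal step, and it is immediately followed by the threshold check.

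For Statement (1), I would argue by contradiction: suppose $v\in V_i\cap G'$ with $d_{G'}(v)<k_i$. Either $d_G(v)<k_i$ already, in which case the initial \textsc{for} loop (or its $p$-partite analogue) would have deactivated $v$ when it reached it, contradicting $v\in G'$; or $d_G(v)\ge k_i$, in which case some neighbor removals brought $v$'s counter below $k_i$, but the decrement step is immediately followed by the status check, so $v$ would have been deactivated at that moment — again a contradiction. For Statement (2), I again argue by contradiction: suppose $v\in G(k_1,\ldots,k_p)$ but $v\notin G'$, say $v\in V_i$. Then $v$ was deactivated during execution; since $d_G(v)\ge d_{G(k_1,\ldots,k_p)}(v)\ge k_i$, $v$ cannot have been deactivated by the initial scan, so it was deactivated by the removal of $d_G(v)-k_i+1$ of its neighbors. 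Because $v$ retains at least $k_i$ neighbors inside $G(k_1,\ldots,k_p)$, at least one of those removed neighbors, call it $w$, also lies in $G(k_1,\ldots,k_p)$, and $w\notin G'$. Iterating this reasoning produces an infinite descending chain of distinct removed vertices of $G(k_1,\ldots,k_p)$ unless we reach the \emph{first} vertex $u$ of $G(k_1,\ldots,k_p)$ to be deactivated; but at the moment $u$ was deactivated none of its $G(k_1,\ldots,k_p)$-neighbors had yet been removed, so its counter was still at least $k_{i(u)}$, contradicting the fact that deactivation requires the counter to be strictly below the threshold.

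The only genuinely new bookkeeping relative to Theorem~\ref{thm:correctness} is that thresholds now vary by part, so at each step one must be careful to compare a vertex's counter against \emph{its own} $k_i$ rather than a global $k$; once that substitution is made consistently, every inequality in the original proof goes through unchanged because the argument only ever uses the relation between a vertex's counter/degree and its associated threshold, never a comparison across different vertices. A secondary point worth a sentence is that the $p$-partite structure is preserved by the algorithm: it only deletes vertices (and incident edges), so $G'$ is still $p$-partite with the same partition classes (some possibly emptied), which is why $G'$ is eligible to be compared with $G(k_1,\ldots,k_p)$ at all. I expect the main obstacle to be purely expository — verifying that the pseudocode of Algorithm~\ref{alg:bipartiteGraphPeelOff} indeed performs the per-part threshold check as described, so that the phrase ``by literally following the proofs for Theorems~\ref{thm:correctness} and~\ref{thm:time}'' is justified — rather than any mathematical difficulty.
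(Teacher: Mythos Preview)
Your proposal is correct and matches the paper's approach exactly: the paper itself does not give a separate proof but simply states that the corollary follows ``by literally following the proofs for Theorems~\ref{thm:correctness} and~\ref{thm:time}'' with the single threshold $k$ replaced by the per-part thresholds $k_i$, which is precisely the substitution you carry out.
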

\begin{cor}\label{cor:linear}
Algorithm~\ref{alg:bipartiteGraphPeelOff} \textsc{p-partiteGraphPeel} runs in $O(\sum_{i=1}^{p}|V_i|+|E|)$ time.
\end{cor}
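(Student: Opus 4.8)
The plan is to mirror the proof of Theorem~\ref{thm:time} essentially verbatim, since \textsc{p-partiteGraphPeel} (Algorithm~\ref{alg:bipartiteGraphPeelOff}) has the same control structure as \textsc{GraphPeel} (Algorithm~\ref{alg:graphPeelOff}): a single outer scan over all vertices together with a queue-driven peeling loop. The only difference is that the degree threshold used in the status check is not a single constant $k$ but the value $k_i$ associated with the part $V_i$ containing the vertex being checked. Hence the whole argument reduces to (a) bounding the per-vertex work by $O(\sum_{i=1}^{p}|V_i|)$, (b) bounding the per-edge work by $O(|E|)$, and (c) checking that the threshold lookup does not inflate the constant-time operations.

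For the per-vertex accounting, initializing the $activeVertex$ and $Counter$ vectors over $V_1\cup\cdots\cup V_p$ costs $O(\sum_{i=1}^{p}|V_i|)$, and the degrees can be obtained in $O(\sum_{i=1}^{p}|V_i|+|E|)$ by one pass over the adjacency lists. As in Theorem~\ref{thm:time}, each vertex is enqueued and dequeued at most once (it is enqueued only when its active flag flips from true to false, which happens at most once), the active status of each vertex changes at most once, and the outer scan touches each vertex exactly once with an $O(1)$ flag-and-counter check. Summing these contributions over all parts yields $O(\sum_{i=1}^{p}|V_i|)$ of vertex-side work.

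For the per-edge accounting, exactly as before, an edge $\{w,u\}$ is inspected only when one of its endpoints $w$ is dequeued; since each vertex is dequeued at most once, each edge is inspected at most twice, hence $O(1)$ times overall. Each inspection performs a decrement of $Counter(u)$ and one status check, both $O(1)$. The final regeneration of $G$ on the surviving active vertices is again a single pass, $O(\sum_{i=1}^{p}|V_i|+|E|)$. Adding the vertex-side and edge-side totals gives the claimed $O(\sum_{i=1}^{p}|V_i|+|E|)$ bound.

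The only point that is genuinely new relative to Theorem~\ref{thm:time}, and the one I would be most careful about, is item (c): the status check now reads $Counter(u)<k_{\pi(u)}$, where $\pi(u)$ is the index of the part containing $u$. Under the standard word-RAM / adjacency-list model this is still a constant-time operation, provided we precompute a part-index array $\pi\colon V\to\{1,\ldots,p\}$ in $O(\sum_{i=1}^{p}|V_i|)$ time during initialization (or assume it is supplied with the input partition). With that observation in place, every step that was $O(1)$ in the proof of Theorem~\ref{thm:time} remains $O(1)$ here, and the argument carries over unchanged. I do not expect a serious obstacle: the term $|V|$ in Theorem~\ref{thm:time} is simply replaced by $\sum_{i=1}^{p}|V_i|$, which in fact equals $|V|$ since the $V_i$ partition $V$.
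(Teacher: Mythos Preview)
Your proposal is correct and follows exactly the approach the paper intends: the paper does not give a separate proof of this corollary but simply states that it is obtained ``by literally following'' the proof of Theorem~\ref{thm:time}, which is precisely what you do. If anything, your write-up is more careful than the original (e.g., noting that each edge may be inspected from both endpoints, and that the part-index lookup $k_{\pi(u)}$ must be made $O(1)$ via a precomputed array), but none of this deviates from the paper's argument.
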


\begin{algorithm}
\begin{scriptsize}
\caption{\textsc{p-partiteGraphPeel}($G=(V_1, V_2,\ldots,V_p, E)$, $(k_1, k_2,\ldots,k_p)$)}
\label{alg:bipartiteGraphPeelOff}
\begin{algorithmic}[1]
\STATE Initialize $activeVertex$ vector to be all true;
\STATE Initialize $Counter$ vector to be the degree of each vertex;
\FOR{$i=1$ to $p$}
    \FORALL{$v\in V_i$}
        \IF{$activeVertex(v)==true$ \&\& $Counter(v)<k_i$}
            \STATE $activeVertex(v)=false$; Add $v$ into $Q$; IterativePeel($Q$);
        \ENDIF
    \ENDFOR
\ENDFOR
\STATE Regenerate $G$ induced by active vertices.
\end{algorithmic}
\begin{algorithmic}
\STATE PROCEDURE IterativePeel($Q$):
\WHILE{$Q\neq\emptyset$}
    \STATE $w=dequeue(Q)$;
    \FORALL{$u$ adjacent to $w$}
        \STATE $Counter(u)=Counter(u)-1$;
        \IF{$activeVertex(u)==true$ \&\& $Counter(u)<k_{p(u)}$\COMMENT{assuming $u\in V_{p(u)}$}}
            \STATE $activeVertex(u)=false$; enqueue $u$ onto $Q$;
        \ENDIF
    \ENDFOR
\ENDWHILE
\end{algorithmic}
\end{scriptsize}
\end{algorithm}

\section{Distributed core mining}
Another big advantage of the graph peel approach is that it can be easily implemented in a distributed environment. The implementation is much succinct than~\cite{montresor2011distributed}, the best known method available for distributed k-core decomposition. Furthermore, this simple distributed $k$-core mining algorithm has a lower message complexity and comparable time complexity compared to~\cite{montresor2011distributed} for the purpose of identifying all $k$-cores of a given $k$. The distributed implementation of Algorithms~\ref{alg:graphPeelOff} and ~\ref{alg:bipartiteGraphPeelOff} can be achieved with similar workflows. For succinctness, we combine the two efforts into one set of pseudocodes: Algorithms~\ref{alg:distributedGraphPeelOffInit} and~\ref{alg:distributedGraphPeelOffMessage}.

\begin{algorithm}
\begin{scriptsize}
\caption{\textsc{onInitial()}}
\label{alg:distributedGraphPeelOffInit}
\begin{algorithmic}[1]
\IF{$degree(v)<k(v)$\COMMENT{for distributed version of Algorithm~\ref{alg:graphPeelOff}, $k(v)=k$; for distributed version of Algorithm~\ref{alg:bipartiteGraphPeelOff}, $k(v)=k_{p(v)}$ where $v\in V_{p(v)}$}}
    \FOR{$u\in neighbors$}
        \STATE $SendOffMessage(u)$;
    \ENDFOR
    \STATE $Node\_Status=off$;\COMMENT{This node becomes inactive.}
\ENDIF
\end{algorithmic}
\end{scriptsize}
\end{algorithm}

\begin{algorithm}
\begin{scriptsize}
\caption{\textsc{onMessage()}}
\label{alg:distributedGraphPeelOffMessage}
\begin{algorithmic}[1]
\STATE $degree=degree-1$;
\IF{$degree<k(v)$\COMMENT{for distributed version of Algorithm~\ref{alg:graphPeelOff}, $k(v)=k$; for distributed version of Algorithm~\ref{alg:bipartiteGraphPeelOff}, $k(v)=k_{p(v)}$ where $v\in V_{p(v)}$}}
    \FOR{$u\in neighbors$}
        \STATE $SendOffMessage(u)$;
    \ENDFOR
    \STATE $Node\_Status=off$;\COMMENT{This node becomes inactive.}
\ENDIF
\end{algorithmic}
\end{scriptsize}
\end{algorithm}

All nodes start with Algorithm~\ref{alg:distributedGraphPeelOffInit} \textsc{onInitial()}. If a node degree is less than $k(v)$, it sends ``off'' messages to neighbors and then becomes inactive, otherwise, it goes into the suspension status and can be awaken to run Algorithm~\ref{alg:distributedGraphPeelOffMessage} \textsc{onMessage()} upon the arrival of a new message. To simplify our discussion, we assume once a node becomes inactive, it will keep dormant and unresponsive to any coming messages.

By literally following the proof of Theorem~\ref{thm:correctness}, it is easy to show that Algorithms~\ref{alg:distributedGraphPeelOffInit} and~\ref{alg:distributedGraphPeelOffMessage} correctly perform the core mining, as stated in the following corollaries.
\begin{cor}
The distributed $k$-core mining with Algorithms~\ref{alg:distributedGraphPeelOffInit} and~\ref{alg:distributedGraphPeelOffMessage} applied to each node, processes the network $G=(V,E)$ into $G(k)$.
\end{cor}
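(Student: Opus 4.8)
The plan is to argue that the distributed execution of Algorithms~\ref{alg:distributedGraphPeelOffInit} and~\ref{alg:distributedGraphPeelOffMessage} simulates, in a message-passing model, exactly the same vertex-removal process as the sequential \textsc{GraphPeel} of Algorithm~\ref{alg:graphPeelOff}, and then invoke Theorem~\ref{thm:correctness}. Concretely, let $G'$ be the final network consisting of all nodes whose $Node\_Status$ is \emph{not} $off$ when the distributed computation quiesces (assuming, as the paragraph preceding the corollary does, that message delivery eventually terminates so that a well-defined final state exists). I would show $G' = G(k)$ by establishing the same two statements used in the proof of Theorem~\ref{thm:correctness}: (1) every surviving node has at least $k(v)$ surviving neighbors, and (2) $G(k)$ is a subgraph of $G'$.

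For Statement~1, observe that a node's $degree$ counter is initialized to its true degree and is decremented by exactly one each time an ``off'' message arrives (Step~1 of \textsc{onMessage}), and an ``off'' message is sent along an edge $(u,v)$ precisely when the endpoint sending it goes $off$ (the \textbf{for} loops in both algorithms). Hence at any point a node's $degree$ counter equals its number of still-active neighbors. The status-to-$off$ transition happens only in the bodies of \textsc{onInitial} and \textsc{onMessage}, and in both cases it is guarded by the test $degree < k(v)$ evaluated immediately beforehand; so if $v$ survives to the end, it never passed that test, and in particular at termination $degree \ge k(v)$, i.e.\ $v$ has at least $k(v)$ surviving neighbors. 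This is the exact analogue of the ``the only degree reduction is immediately followed by a degree check'' argument in Theorem~\ref{thm:correctness}.

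For Statement~2, I would run the same minimal-counterexample / infinite-descent argument verbatim. Suppose some $v \in G(k)$ is turned $off$. Since $v \in G(k)$, $v$ has at least $k(v)$ neighbors inside $G(k)$, so when $v$'s counter first dropped below $k(v)$ at least one of the ``off'' messages that caused decrements came from a neighbor $w$ that itself lies in $G(k)$; and $w$ sending an ``off'' message means $w$ was turned $off$ too, and strictly earlier (the message from $w$ must have been sent before it was received and triggered $v$'s decrement). Chasing this back through the (finite) causal history of ``off'' messages, we reach a first node $u \in G(k)$ ever to be turned $off$; but $u$ can only be turned off after failing the test $degree < k(u)$, which requires that at least $d_G(u) - k(u) + 1$ of its neighbors have already been turned off, contradicting the minimality of $u$ (via \textsc{onInitial}, $u$ would need $d_G(u) < k(u)$, impossible for $u \in G(k)$, since the corresponding $k$-degree graph requires degree $\ge k(u)$). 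This contradiction proves Statement~2, and the two statements together give $G' = G(k)$.

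The main obstacle is not the graph-theoretic core of the argument — that transfers mechanically from Theorem~\ref{thm:correctness} — but rather pinning down the distributed-model assumptions precisely enough that ``the final state'' and ``the causal order of off-messages'' are well defined: one needs reliable (though not necessarily ordered) message delivery, the standing assumption that an inactive node stays inactive and ignores later messages, and the fact that the total number of ``off'' messages is finite (at most $2|E|$, since at most one is sent in each direction of each edge), which guarantees termination and lets the descent argument bottom out. Once these modeling points are stated, the proof is a direct transcription of the proof of Theorem~\ref{thm:correctness} with ``$activeVertex(v)$ set to false'' replaced by ``$Node\_Status = off$'' and ``$Counter$ decremented at Step~10'' replaced by ``$degree$ decremented on receipt of an off-message,'' so I would keep the write-up correspondingly terse, as the paper itself signals by saying the result follows ``by literally following the proof of Theorem~\ref{thm:correctness}.''
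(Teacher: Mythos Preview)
Your proposal is correct and matches the paper's own treatment, which does not give a separate proof but simply states that the corollary follows ``by literally following the proof of Theorem~\ref{thm:correctness}''; you have carried out exactly that transcription, with the appropriate substitutions of $Node\_Status=off$ for $activeVertex=false$ and message-triggered decrements for Step~10. Your additional remarks on the distributed-model assumptions (reliable delivery, finiteness of off-messages, well-definedness of the causal order) are more careful than the paper itself and are the right things to pin down; the only minor imprecision is the claim that a node's counter equals its number of active neighbors ``at any point'' --- in-flight messages can make the counter lag --- but since you draw the conclusion at quiescence this does not affect the argument.
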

\begin{cor}
The distributed $(k_1,k_2,\ldots,k_p)$-core mining, with Algorithms~\ref{alg:distributedGraphPeelOffInit} and~\ref{alg:distributedGraphPeelOffMessage} applied to each node, processes the network $G=(V_1,V_2,\ldots,V_p,E)$ into $G(k_1,k_2,\ldots,k_p)$.
\end{cor}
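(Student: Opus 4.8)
The plan is to reduce the correctness of the distributed $(k_1,k_2,\ldots,k_p)$-core mining to the already-established correctness of Algorithm~\ref{alg:bipartiteGraphPeelOff}, rather than re-running the entire argument of Theorem~\ref{thm:correctness} from scratch. The key observation is that each node $v$ running \textsc{onInitial()} and \textsc{onMessage()} performs exactly the same local bookkeeping as Algorithm~\ref{alg:bipartiteGraphPeelOff}: it maintains a running degree counter, it deactivates itself precisely when that counter first drops below its threshold $k_{p(v)}$, and upon deactivation it notifies each neighbor exactly once (one ``off'' message per incident edge, matching the ``for all $u$ adjacent to $w$'' loop). A received ``off'' message is the distributed analogue of the $Counter(u)=Counter(u)-1$ decrement at Step~10 of \textsc{IterativePeel}. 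So the plan is to set up a correspondence between the (partially ordered, asynchronous) sequence of node-deactivation events in the distributed execution and the (sequential) order in which vertices are added to the queue $Q$ in Algorithm~\ref{alg:bipartiteGraphPeelOff}, and argue the two executions deactivate the same set of vertices.

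First I would state and prove the analogue of Statement~1 of Theorem~\ref{thm:correctness}: every node that remains active at quiescence has current degree (number of active neighbors) at least $k_{p(v)}$. This is immediate from the algorithm's structure, exactly as in Case~(1) and Case~(2) of the original proof: a node with initial degree below threshold deactivates in \textsc{onInitial()}, and any later drop below threshold is caught by the test in \textsc{onMessage()} immediately after the decrement. One subtlety to address here is the paper's simplifying assumption that an inactive node ignores incoming messages: I would note that this does not affect the invariant, since an inactive node is not in $G'$ anyway, and its neighbors correctly account for its deactivation because the ``off'' message was sent when \emph{that neighbor} went off, not when the recipient reads it.

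Next I would prove the analogue of Statement~2: $G(k_1,\ldots,k_p)$ is a subgraph of the residual graph $G'$. I would argue by contradiction exactly as in Theorem~\ref{thm:correctness}: suppose some $v \in G(k_1,\ldots,k_p)$ is deactivated; since $v$ has at least $k_{p(v)}$ neighbors inside $G(k_1,\ldots,k_p)$, its deactivation (which requires at least $d_G(v)-k_{p(v)}+1$ ``off'' messages, hence at least one ``off'' message from a neighbor $w$ in $G(k_1,\ldots,k_p)$ — note ``off'' messages are only ever sent by a node that is itself going inactive) forces some $w \in G(k_1,\ldots,k_p)$ to have been deactivated strictly earlier. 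Here ``earlier'' means earlier in the causal/happens-before order: the ``off'' message from $w$ to $v$ is sent only after $w$ becomes inactive, so $w$'s deactivation causally precedes $v$'s. Iterating along this finite chain of causally-decreasing deactivation events inside $G(k_1,\ldots,k_p)$ reaches a first one, contradicting the fact that the first deactivated vertex in $G(k_1,\ldots,k_p)$ still had $\ge k_{p(v)}$ active neighbors (all of $G(k_1,\ldots,k_p)$, which has not yet lost a vertex) at the moment it was deactivated.

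The main obstacle — and the only place the distributed setting genuinely differs from the sequential proof — is making the notion of ``the first vertex in $G(k_1,\ldots,k_p)$ to be removed'' rigorous when there is no global clock and messages are asynchronous. I would handle this by invoking the standard happens-before (causal) partial order on events: the chain of deactivations constructed in the contradiction argument is strictly decreasing in this partial order (each ``off'' message is received after it is sent, and sent after the sender's deactivation), the set of events is finite, and a finite set under a partial order has a minimal element, which suffices — I never need a total order. I would also remark that the $p=1$ / single-threshold case is literally the specialization $k(v)=k$, so both corollaries follow from the one argument, and that nothing in the proof depends on message delivery order, only on eventual delivery, so the result holds for any fair asynchronous schedule.
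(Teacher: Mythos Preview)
Your proposal is correct and follows the same approach as the paper, which offers no separate proof but simply states that the corollary follows ``by literally following the proof of Theorem~\ref{thm:correctness}.'' In fact you go further than the paper: your explicit use of the happens-before partial order to make rigorous the notion of ``the first vertex of $G(k_1,\ldots,k_p)$ to be removed'' in an asynchronous execution fills a gap that the paper leaves entirely implicit.
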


The distributed core mining by Algorithms~\ref{alg:distributedGraphPeelOffInit} and~\ref{alg:distributedGraphPeelOffMessage} also has a low message complexity, as stated in Lemma~\ref{lemma:distributedCorrect}.
\begin{lemma}\label{lemma:distributedCorrect}
The message complexity of the distributed core mining is bounded by $O(|E|)$.
\end{lemma}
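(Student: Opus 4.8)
The plan is to bound the total number of ``off'' messages sent over the entire execution by charging each message to a distinct directed edge. First I would observe that, under the stated assumption that a node becomes permanently dormant once it turns inactive, each node sends ``off'' messages exactly once during the whole run: either at the very beginning (in \textsc{onInitial()} at Step~3, if its degree is already below $k(v)$), or later when a call to \textsc{onMessage()} first drives its counter below $k(v)$ (Step~3 of Algorithm~\ref{alg:distributedGraphPeelOffMessage}). In particular, once $Node\_Status=off$, no further messages originate from that node, so the set of message-senders is simply the set of vertices that are eventually removed, which is $V \setminus V(G(k))$ (or $V \setminus V(G(k_1,\ldots,k_p))$ in the $p$-partite case) by the correctness corollaries.

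Next I would count: when a node $v$ turns off, it sends one message along each of its incident edges, i.e.\ $\deg_G(v)$ messages. Summing over all nodes that ever turn off gives a total message count of $\sum_{v \text{ removed}} \deg_G(v) \le \sum_{v \in V} \deg_G(v) = 2|E|$. Hence the message complexity is $O(|E|)$. To make the charging crisp, I would instead charge the message that $v$ sends to neighbor $u$ against the ordered pair (directed edge) $(v,u)$; since $v$ fires its messages only once, each ordered pair is charged at most once, and there are exactly $2|E|$ ordered pairs, giving the $O(|E|)$ bound directly.

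I would then add a sentence noting that, strictly speaking, this counts only application-level ``off'' messages and relies on the simplifying assumption in the text that inactive nodes ignore incoming messages (so a removed node never re-sends); without that assumption one would need to argue that a node that is already off simply drops messages rather than relaying them, which does not change the bound since no new messages are generated in that case. It is also worth remarking that this $O(|E|)$ bound improves on the $O(|E|)$-amortized-but-with-larger-constants or iterative re-estimation behavior of the distributed $k$-core decomposition of~\cite{montresor2011distributed}, which is the point of the lemma.

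The main obstacle is not the arithmetic but pinning down precisely \emph{why each node sends at most once}: this hinges entirely on the ``permanently dormant'' assumption, and on the fact that the guard $degree<k(v)$ together with $Node\_Status$ bookkeeping ensures a node does not re-enter the message-sending branch. Once that invariant is stated cleanly, the edge-charging argument is immediate; so I would spend most of the proof establishing ``every node executes the \textbf{for} loop that calls \textsc{SendOffMessage} at most once across the union of \textsc{onInitial()} and all \textsc{onMessage()} invocations'' and treat the summation $\sum \deg_G(v) = 2|E|$ as a one-line finish.
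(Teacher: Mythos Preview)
Your proposal is correct and is essentially the same argument as the paper's: the paper simply observes that each edge is used at most once by each endpoint to send an off-message, so at most $2|E|$ messages are sent in total. Your directed-edge charging and the $\sum_v \deg_G(v)=2|E|$ summation are exactly this observation, just spelled out more carefully (in particular, you make explicit the ``each node fires at most once'' invariant that the paper leaves implicit via the dormancy assumption).
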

\begin{proof}
For any edge connecting two vertices, it will be used at most once by each end vertex to send the off-message. Thus, the lemma is correct because at most two off-messages will pass one edge.
\end{proof}

In addition, the distributed core mining by Algorithms~\ref{alg:distributedGraphPeelOffInit} and~\ref{alg:distributedGraphPeelOffMessage} converge fast under certain circumstances as discussed in the following. To facilitate our analysis, we assume each node works under synchronized phases. In each phase, a node is able to receive all messages sent out by other nodes in the previous phase, and if applicable sent off-messages to all neighbors. Then we have the following lemma.
\begin{lemma}
The distributed core mining, with Algorithms~\ref{alg:distributedGraphPeelOffInit} and~\ref{alg:distributedGraphPeelOffMessage} applied to each node, converges in no more than $|V|-k$ phases for a $k$-core mining on Graph $G=(V,E)$ where $k\leq |V|$, or no more than $(|V_1|-k_2)+(|V_2|-k_1)$+1 phases for a $(k_1,k_2)$-core mining on Bipartite Graph $G=(V_1, V_2$), or no more than $\sum_{i=1}^{p}(|V_i|)-\min_{1\leq i\leq p}(k_i)$ phases for a $(k_1,k_2,\ldots,k_p)$-core mining on Graph $G=(V_1, V_2, \ldots, V_p, E)$ where $k_i\leq |V_i|$ for any $1\leq i\leq p$.
\end{lemma}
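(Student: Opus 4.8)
The plan is to recast the synchronous execution as the familiar ``peel every currently deficient vertex at once'' iteration and to bound its number of rounds. Put $A_0=V$ (respectively $\bigcup_i V_i$) and, for $t\ge 1$, $A_t=A_{t-1}\setminus R_t$, where $R_t=\{\,v\in A_{t-1}:\deg_{G[A_{t-1}]}(v)<k(v)\,\}$ with $k(v)=k$ in the plain case and $k(v)=k_{p(v)}$ in the $p$-partite case. This is exactly the effect of phase $t$ of the synchronous model, phase $1$ being \textsc{onInitial()}: by phase $t$ a vertex $v$ has received off-messages from precisely the neighbours of $v$ lying outside $A_{t-1}$, so its counter then equals $\deg_{G[A_{t-1}]}(v)$ and it is switched off iff this is $<k(v)$. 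By Theorem~\ref{thm:correctness} the sequence stabilises at $A_\ell=V(G(k))$ (resp.\ $V(G(k_1,\ldots,k_p))$), where $\ell$ denotes the last phase in which some vertex is switched off. Since a vertex is switched off in phase $t\ge 2$ only upon receiving an off-message sent in phase $t-1$, a backward induction from $t=\ell$ gives $R_t\ne\emptyset$ for every $t\in\{1,\ldots,\ell\}$, so at least $\ell$ vertices are removed in all. The process converges in $\ell$ phases (or $\ell-1$ if the initialisation round is not counted), so it suffices to bound $\ell$; the cases $\ell\le 2$ are immediate, and I will assume $\ell\ge 3$.

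The crucial estimate is a lower bound on $|A_{\ell-2}|$, the active set two phases before the end. Choose any $v\in R_\ell$. Since $v\notin R_{\ell-1}$, its counter in phase $\ell-1$, which is $\deg_{G[A_{\ell-2}]}(v)$, is at least $k(v)$; hence $v$ together with $k(v)$ of its neighbours all lie in $A_{\ell-2}$, so $|A_{\ell-2}|\ge k(v)+1\ge \min_i k_i+1$. (The same bound is free when $G(k_1,\ldots,k_p)\ne\emptyset$: a nonempty $(k_1,\ldots,k_p)$-degree graph has a vertex of degree $\ge\min_i k_i$, hence at least $\min_i k_i+1$ vertices, all contained in $A_\ell\subseteq A_{\ell-2}$.) Now phases $1,\ldots,\ell-2$ remove exactly $|V|-|A_{\ell-2}|$ vertices and at least $\ell-2$ of them, so $\ell-2\le |V|-\min_i k_i-1$, i.e.\ $\ell\le |V|-\min_i k_i+1$. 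For $p=1$ this yields the claimed $|V|-k$ bound, and in general the claimed $\sum_i|V_i|-\min_i k_i$ bound.

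It remains to obtain the sharper bipartite bound, for which one must save $k_1+k_2$ rather than only $\min(k_1,k_2)$; this is the delicate part. If $G(k_1,k_2)\ne\emptyset$ it is again easy, since such a core has at least $k_2$ vertices in $V_1$ and at least $k_1$ in $V_2$, so at most $(|V_1|-k_2)+(|V_2|-k_1)$ vertices are ever removed. When $G(k_1,k_2)=\emptyset$ I would look at the last one or two phases. If $R_\ell$ meets both parts, taking one such vertex on each side and running the above ``it survives the previous phase'' argument yields $|A_{\ell-2}\cap V_2|\ge k_1$ and $|A_{\ell-2}\cap V_1|\ge k_2$ simultaneously, so $|A_{\ell-2}|\ge k_1+k_2$ and we are done. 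If $R_\ell$ lies in a single part, say $V_1$, then $A_{\ell-1}=R_\ell\subseteq V_1$, so every $v\in A_{\ell-1}$ has its (at least $k_1$) neighbours in $A_{\ell-2}$ entirely in $V_2$, none of them surviving to $A_{\ell-1}$; hence phase $\ell-1$ removes at least $k_1$ vertices of $V_2$, and tracing one of them back one further phase gives $|A_{\ell-3}\cap V_1|\ge k_2$ while $|A_{\ell-3}\cap V_2|\ge|A_{\ell-2}\cap V_2|\ge k_1$. Extracting the \emph{exact} additive constant $+1$ from this last sub-case — rather than the $+2$ that the most naive bookkeeping gives — by charging the extra vertices that phases $\ell-2$ and $\ell-1$ are forced to delete, is the step I expect to be the main obstacle; it is also where one has to fix the convention for whether the initialisation round is counted as a phase.
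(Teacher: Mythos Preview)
Your approach is exactly the paper's: show that each phase before convergence removes at least one vertex, and that at the start of the penultimate phase at least $k+1$ (resp.\ $\min_i k_i+1$) vertices remain active, then subtract. The paper carries this out only for the $k$-core case and dismisses the bipartite and $p$-partite bounds with ``can be obtained by following the above strategy''; your bipartite case analysis (splitting on whether $R_\ell$ meets both parts) and your explicit flagging of the off-by-one and phase-counting conventions already go further than the paper's own proof.
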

\begin{proof}
Let us consider the case of mining $k$-cores which takes at least 2 phases to finish (the lemma is clearly true if a $k$-core mining finishes in 1 phase). It is easy to prove by contradiction that for any phase before the last phase, there are at least $k+1$ active nodes at the beginning of the phase. Given this, we only need to show that before the last phase, at least one node will be set off in each phase. This can be proved by contradiction. If at one phase $p$ before the last phase no node is set off, then at the next phase no message will reach any node, thus again no node will be set off and we reach a contradiction that the distributed $k$-core mining completes on or before phase $p$.\\
The proofs for the case of a $(k_1,k_2)$-core mining and a $(k_1,k_2,\ldots,k_p)$-core mining can be obtained by following the above strategy.
\end{proof}

\section{Conclusion}
We have shown that a simple linear algorithm works for mining $k$-cores, i.e., identifying all $k$-cores for a given $k$. Based on this algorithm, we have shown that mining $(k_1, k_2,\ldots,k_p)$-cores on $p$-partite graphs can be implemented in a simple linear approach. In addition, these simple linear algorithms can be extended to a distributed environment with a low message complexity and a short running time. We expect core mining algorithms described in this manuscript will contribute to large data processing for many real applications.

\label{}

\bibliographystyle{elsarticle-num}
\begin{small}
\begin{singlespace}
\bibliography{core}
\end{singlespace}
\end{small}
\end{document}